\def\ps@pprintTitle{%
 \let\@oddhead\@empty
 \let\@evenhead\@empty
 \def\@oddfoot{\footnotesize{\textit{Preprint arXiv}}}%
 \let\@evenfoot\@oddfoot}
\newtheorem{theorem}{Theorem}
\begin{document}

\begin{frontmatter}

\title{Modelling wind speed with a univariate probability distribution depending on two baseline functions}

\author[label1]{F\'abio V. J. Silveira\corref{cor1}}
\cortext[cor1]{Corresponding author}
\ead{fabio.silveira@ifpb.edu.br}

\author[label2]{Frank Gomes-Silva}

\author[label3]{C\'icero C. R. Brito}

\author[label2]{Jader S. Jale}

\author[label2]{Felipe R. S. Gusm\~ao}

\author[label4]{S\'ilvio F. A. Xavier-J\'unior}

\author[label2]{Jo\~ao S. Rocha}

\address[label1]{Federal Institute of Education, Science and Technology of Para\'iba, Jo\~ao Pessoa, PB, Brazil}

\address[label2]{Department of Statistics and Informatics, Federal Rural University of Pernambuco, Recife, PE, Brazil}

\address[label3]{Federal Institute of Education, Science and Technology of Pernambuco, Recife, PE, Brazil}

\address[label4]{Department of Statistics, Para\'iba State University, Campina Grande, PB, Brazil}

\begin{abstract}
Characterizing the wind speed distribution properly is essential for the satisfactory production of potential energy in wind farms, being the mixture models usually employed in the description of such data. However, some mixture models commonly have the undesirable property of non-identifiability. In this work, we present an alternative distribution which is able to fit the wind speed data adequately. The new model, called Normal-Weibull-Weibull, is identifiable and its cumulative distribution function is written as a composition of two baseline functions. We discuss structural properties of the class that generates the proposed model, such as the linear representation of the probability density function, moments and moment generating function. We perform a Monte Carlo simulation study to investigate the behavior of the maximum likelihood estimates of the parameters. Finally, we present applications of the new distribution for modelling wind speed data measured in five different cities of the Northeastern Region of Brazil.
\end{abstract}

\begin{keyword}
Goodness-of-fit; Identifiability; L-BFGS-B algorithm; Maximum likelihood; Monte Carlo simulation
\end{keyword}

\end{frontmatter}


\section{Introduction and proposed class}

The concern about the emission of greenhouse gases and environmental contamination from conventional energy generation procedures like coal and oil power plants encourages research on alternative resources. A smaller impact on the environment is an advantage of cleaner and sustainable energy production techniques, such as solar, geothermal and wind, over the combustion of fossil fuels. 

The installed capacity of wind power in Brazil increased from 29 MW in 2005 to roughly 16,000 MW (9\% of the total capacity of electricity generation) in 2020 \citep{Araujo2020}. The suitable choice of the wind turbine must match with the wind behavior at the site of installation. \citet{Perkin2015} mention that inadequate turbine selection results in a financially sub-optimal investment. Thus, setting the probability distribution appropriately to model the wind speed is essential. \citet{Eltamaly2013} used the two-parameter Weibull distribution in a new computer program to perform the calculations required to precisely design the wind energy system and to seek the compatibility between sites and turbines.  

\citet{Ilhan2012} declare that despite the wide acceptance of Weibull distribution \citep{Pishgar2015,Safari2011,Weisser2003,Kollu2012}, it may sometimes be poor to model all wind speed data available in nature. Hereupon, they propose using two possible models, namely, the skewed generalized error distribution \citep{Bali2008} and the skewed $t$ distribution \citep{Hansen1994}. Some other distributions used for wind speed and power modelling are Rayleigh \citep{Pishgar2015}, gamma \citep{Morgan2011}, normal \citep{Safari2011}, generalized extreme value \citep{Kollu2012} and Birnbaum-Saunders \citep{Mohammadi2017}. Additionally, applications of nonparametric methods to wind speed modelling are also found in the literature \citep{Qin2011,Bo2016,Han2019}.

Oftentimes, one requires more flexibility from the probability density function (pdf), as in case of bimodality \citep{Jaramillo2004} or calm winds regime \citep{Chang2011}. In general, finite mixture models are more flexible than the typical single ones. \citet{Akdag2010} compared the usual biparametric Weibull and the two-component mixture Weibull distribution in a study focused on wind regimes presenting nearly zero percentage of null speeds; they concluded that the mixture is more suitable to describe such wind conditions. \citet{Carta2007} used three different methods to estimate the parameters of the two-component mixture Weibull, namely, the method of moments, maximum likelihood and least squares; they verified that there is no significant difference among them.       

The mixture density can be written as:
\begin{align} \label{eq:nggmixture}
 f(x;\bm{\psi})=\sum_{i=1}^d w_i f_i(x;\bm{\theta}_i) 
\end{align} 
where the vector $\bm{\psi} = (w_1,\ldots,w_{d-1},\bm{\eta}^{\top})^{\top}$ contains the unknown parameters of the mixture model and the vector $\bm{\eta}$ contains all the distinct parameters in $\bm{\theta}_1, \ldots,\bm{\theta}_d$. Since $w_1,\ldots,w_d$ are positive and sum up to one, the presence of $w_d$ in $\bm{\psi}$ is unnecessary. 

The general definition of identifiability states that a family of densities $\{ f(x;\bm{\psi}): \bm{\psi} \in \bm{\Psi}\}$ is identifiable if:
\begin{align} \label{eq:nggidentif}
 f(x;\bm{\psi}) = f(x;\bm{\psi}^{\star}) \Leftrightarrow \bm{\psi} = \bm{\psi}^{\star}\,.
\end{align} 
It is not seldom that ($\ref{eq:nggidentif}$) fails when two or more component densities in ($\ref{eq:nggmixture}$) belong to the same parametric family. Such is the case of the mixture of normal densities. Consider $d=2$, $f_1$ and $f_2$ are normal densities, $w_1=0.5$ in ($\ref{eq:nggmixture}$) and let $\bm{\eta}_1=(\mu_1,\sigma_1,\mu_2,\sigma_2)^{\top}$ and $\bm{\eta}_2=(\mu_2,\sigma_2,\mu_1,\sigma_1)^{\top}$, where $\mu_1 \neq \mu_2$, $\sigma_1 \neq \sigma_2$. We have $\bm{\psi} = (w_1,\bm{\eta}_1^{\top})^{\top} \neq (w_1,\bm{\eta}_2^{\top})^{\top} = \bm{\psi}^{\star} \Rightarrow f(x;\bm{\psi}) = f(x;\bm{\psi}^{\star})$. That is, ($\ref{eq:nggmixture}$) may be invariant under certain permutations of the elements in the parametric vector. \citet{mclachlan} mention an alternative definition of identifiability for mixture models, such that the mixture of $d$ normal densities would be identifiable under specific conditions. Nonetheless, they remark that it does not overcome the complications due to the interchanging of component labels.

Models like the mixture of normal or Weibull densities are quite flexible tools, although the parametric estimation is only credible if the distribution under study is identifiable. We present in this paper a class, whose submodels may be feasible alternatives to mixtures of two components from the same parametric family. The class is derived using the method of generating classes of probability distributions of \citet{Brito2019}. Its cumulative distribution function (cdf) is formulated as a composition of two baselines and under certain conditions, it satisfies (\ref{eq:nggidentif}), even if both baselines belong to the same parametric family.    

The cdf of the general class is given by:
\begin{equation} \label{eq:geradorBritongg}
 F(x) = \zeta (x) \sum_{j=1}^n \int_{L_j (x)}^{U_j (x)} \mathrm{d}H (t) - \nu (x) \sum_{j=1}^n \int_{M_j (x)}^{V_j (x)} \mathrm{d}H (t) 
\end{equation} where $H$ is a cdf, $n \in \mathbb{N}$, $\zeta, \nu : \mathbb{R} \mapsto \mathbb{R}$ and 
$L_j, U_j, M_j, V_j: \mathbb{R} \mapsto \mathbb{R} \cup \{ \pm \infty\}$ are special functions that will be discussed in the next section.

\subsection{The Normal-$(G_1,G_2)$ class and some structural properties}
The method established by~\citet{Brito2019} states that if $H, \zeta, \nu : \mathbb{R} \mapsto \mathbb{R}$ and 
$L_j, U_j, M_j, V_j: \mathbb{R} \mapsto \mathbb{R} \cup \{ \pm \infty\}$ for $j=1,2,3,\ldots,n$ are monotonic and right continuous functions such that: 
\begin{itemize}
 \item[(c1)] $H$ is a cdf and $\zeta$ and $\nu$ are non-negative;
\item[(c2)] $\zeta (x)$, $U_j (x)$ and $M_j (x)$ are non-decreasing and $\nu(x)$, $V_j(x)$, $L_j(x)$ are non-increasing $\forall j=1,2,3,\ldots,n$;
\item[(c3)] If $\displaystyle \lim_{x \to -\infty} \zeta (x) \neq \lim_{x \to -\infty} \nu(x)$, then 
$\displaystyle \lim_{x \to -\infty} \zeta (x)=0$; \textbf{or} \\ $ \displaystyle \lim_{x \to -\infty} U_j (x) = \lim_{x \to -\infty} L_j(x)\, \forall j=1,2,3,\ldots,n$, and $ \displaystyle \lim_{x \to -\infty} \nu (x)= 0$; \textbf{or} \\$\displaystyle \lim_{x \to -\infty} M_j (x) = \lim_{x \to -\infty} V_j(x)\, \forall j=1,2,3,\ldots,n$;
\item[(c4)] If $\displaystyle \lim_{x \to -\infty} \zeta (x) = \lim_{x \to -\infty} \nu(x) \neq 0$, then
$\displaystyle \lim_{x \to -\infty} U_j (x) = \lim_{x \to -\infty} V_j(x)$ and $\displaystyle \lim_{x \to -\infty} M_j (x) = \lim_{x \to -\infty} L_j(x)\, \forall j=1,2,3,\ldots,n$;
\item[(c5)] $\displaystyle \lim_{x \to -\infty} L_j (x) \leq \lim_{x \to -\infty} U_j(x)$ and if $\displaystyle \lim_{x \to -\infty} \nu(x) \neq 0$, then $\displaystyle \lim_{x \to +\infty} M_j (x) \leq \lim_{x \to +\infty} V_j(x)\, \forall j=1,2,3,\ldots,n$;
\item[(c6)] $\displaystyle \lim_{x \to +\infty} U_n (x) \geq \sup \{ x \in \mathbb{R}: H(x)<1 \}$ and $\displaystyle \lim_{x \to +\infty} L_1 (x) \leq \inf \{ x \in \mathbb{R}: H(x)>0 \}$;
\item[(c7)] $\displaystyle \lim_{x \to +\infty} \zeta (x)=1$;
\item[(c8)] $\displaystyle \lim_{x \to +\infty} \nu (x)=0$ or $\displaystyle \lim_{x \to +\infty} M_j (x) = \lim_{x \to +\infty} V_j(x)\, \forall j=1,2,3,\ldots,n$ and $n \geq 1$; 
\item[(c9)] $\displaystyle \lim_{x \to +\infty} U_j (x) = \lim_{x \to +\infty} L_{j+1}(x)\, \forall j=1,2,3,\ldots,n-1$ and $n \geq 2$; 
\item[(c10)] $H$ is a cdf without points of discontinuity or all functions $L_j(x)$ and $V_j(x)$ are constant at the right of the vicinity of points whose image are points of discontinuity of $H$, being also continuous in that points. Moreover, $H$ does not have any point of discontinuity in the set $\displaystyle \left\{ \displaystyle \lim_{x \to \pm \infty} L_j (x), \lim_{x \to \pm \infty} U_j (x), \lim_{x \to \pm \infty} M_j (x), \lim_{x \to \pm \infty} V_j (x)\right\}$ for some $j=1,2,3,\ldots,n$;
\end{itemize} then Equation~(\ref{eq:geradorBritongg}) is a cdf.

Let $n=1$, $H(t) =\mathrm{\Phi}(t)$, namely, the standard normal cdf, $\zeta(x)=1$, $\nu(x)=0$, $U_1 (x) = G_1(x)/[1-G_1(x)]$ and $L_1(x) = \log[1-G_2(x)]$, where $G_1(x)$ and $G_2(x)$ are cdfs. The function in Equation~(\ref{eq:geradorBritongg}) turns into:
\begin{equation} \label{eq:cdfngg}
 F_{G_1,G_2}(x) = \int^{\frac{G_1(x)}{1-G_1(x)}}_{\log[1-G_2(x)]} \mathrm{d}\mathrm{\Phi}(t).
\end{equation} We took $U_1$ and $-L_1$ from the table of differentiable and monotonically non-decreasing functions presented in the well-known paper of \citet{Alzaatreh2013}, whose method was used to create generalized distributions of the T-X family. We have intentionally picked the two simplest functions from the cited table; alternative (and more complicated) choices for $U_1$ and $L_1$ would naturally give rise to different classes.
Defining $M_1(x)$ and $V_1(x)$ is not relevant, since $\nu(x)=0$. Also, for obvious reasons, there is no need to verify (c4), (c5) and (c9). The conditions (c1), (c7), (c8) and (c10) are straightforward. As $U_1(x)$ and $\zeta(x)$ are non-decreasing and $L_1(x)$ is non-increasing, (c2) is true. 
It is easy to verify that $\displaystyle \lim_{x\rightarrow-\infty} U_1(x) = 0 = \lim_{x\rightarrow-\infty} L_1(x)$; and since $\displaystyle \lim_{x\rightarrow-\infty} \nu(x) = 0$, (c3) is satisfied. The condition (c6) is also true because $\displaystyle \lim_{x \to +\infty} U_1(x) = +\infty = \sup \{ x \in \mathbb{R}: \mathrm{\Phi}(x)<1 \}$ and $\displaystyle \lim_{x \to +\infty} L_1(x) = -\infty = \inf \{ x \in \mathbb{R}: \mathrm{\Phi}(x)>0\}$.

Thereby, in agreement with the method exposed above, Equation~(\ref{eq:cdfngg}) is a cdf. As already mentioned, it can be viewed as a composite function of two baselines. Henceforth, let it be denoted by Normal-$(G_1,G_2)$ class of probability distributions.

Since $\phi(t) = \frac{1}{\sqrt{2 \pi}} e^{-t^2 /2}$, and $\mathrm{\Phi}(x) = \int_{-\infty}^x \phi(t) \mathrm{d}t$, one can write Equation~(\ref{eq:cdfngg}) as follows:
\begin{equation} \label{eq:cdfnggdiff}
 F_{G_1,G_2}(x) = \mathrm{\Phi} \left( \frac{G_1(x)}{1-G_1(x)} \right) - \mathrm{\Phi} \left(\log[1-G_2(x)] \right) \,.
\end{equation} 
In case of continuous $G_1(x)$ and $G_2(x)$, one can take the derivative of Equation~(\ref{eq:cdfnggdiff}) with respect to $x$ to obtain the following pdf:
\begin{equation}
 f_{G_1,G_2}(x) = \phi \left(\frac{G_1(x)}{1-G_1(x)} \right) \frac{g_1(x)}{[1-G_1(x)]^2} + \phi \left( \log[1-G_2(x)] \right) \frac{g_2(x)}{1-G_2(x)}, \label{eq:pdfngg}
\end{equation} where $g_i(x)$ is the pdf of the random variable whose cdf is $G_i(x)$, for $i \in \{1,2\}$.

At this point, we need to define properly the support of the distributions that emerge from the new class. Submodels of classes that may be written as a composite function of one single baseline usually have the same support of the baseline. However, characterizing the support of a submodel from~(\ref{eq:cdfngg}) is not so straightforward, especially if the two baselines have different supports. As previously mentioned, given that $U_1 (G_1(x),G_2(x)) = G_1(x)/[1-G_1(x)]$, $L_1(G_1(x),G_2(x)) = \log[1-G_2(x)]$ and $S_H=\mathbb{R}$, namely, the support of $H(t)$ is $\mathbb{R}$, we have that:
\begin{itemize}
 \item[(a)] $S_H$ is a convex set;
 \item[(b)] $U_1(1,1) = U_1(G_1(+\infty),G_2(+\infty)) = +\infty = \sup \{ x \in \mathbb{R}: \mathrm{\Phi}(x)<1\}$, $L_1(1,1) = L_1(G_1(+\infty),G_2(+\infty)) = -\infty = \inf \{ x \in \mathbb{R}: \mathrm{\Phi}(x)>0\}$, $U_1 (G_1(x),G_2(x))$ and $L_1 (G_1(x),G_2(x))$ are monotonic functions.   
\end{itemize}
According to the Theorem (T4) in~\citep{Brito2019}, (a) and (b) entail that the support of a distribution from~(\ref{eq:cdfngg}) is the union of the supports of $G_1$ and $G_2$.

In the following lines, we demonstrate that, under specific conditions, the distributions generated by (\ref{eq:cdfnggdiff}) enjoy the attractive property of identifiability. It is important because it assures the uniqueness of the estimates of the parameters.

\begin{theorem} \label{teo:ngg}

 Let $G_1(x|\bm{\theta}_1)$ and $G_2(x|\bm{\theta}_2)$ be the baseline cdfs of the normal-$(G_1,G_2)$ cdf $F_{G_1,G_2}(x|\bm{\theta})$ (\ref{eq:cdfnggdiff}), $\bm{\theta}_1 = (\theta_{1},\ldots,\theta_{r}) \in \bm{\Theta}_1$, $\bm{\theta}_2 = (\theta_{r+1},\ldots,\theta_{r+m}) \in \bm{\Theta}_2$ and \\$\bm{\theta} = (\theta_{1},\ldots,\theta_{r},\theta_{r+1},\ldots,\theta_{r+m}) \in \bm{\Theta}$, where $\bm{\Theta}_1$, $\bm{\Theta}_2$ and $\bm{\Theta}$ are the parametric spaces associated with $G_1$, $G_2$ and $F_{G_1,G_2}$ respectively. If $G_1$ and $G_2$ are identifiable, then $F_{G_1,G_2}$ is identifiable.
\end{theorem}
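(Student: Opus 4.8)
The plan is to reduce the identifiability of $F_{G_1,G_2}$ to the separation of its two additive building blocks, and then to close by a chain of injectivity arguments. Throughout, abbreviate the two composite terms of~(\ref{eq:cdfnggdiff}) as
\[
A(x)=\mathrm{\Phi}\!\left(\frac{G_1(x|\bm{\theta}_1)}{1-G_1(x|\bm{\theta}_1)}\right),\qquad
B(x)=\mathrm{\Phi}\!\left(\log[1-G_2(x|\bm{\theta}_2)]\right),
\]
and write $A^{\star},B^{\star}$ for the same expressions at $\bm{\theta}^{\star}=(\bm{\theta}_1^{\star},\bm{\theta}_2^{\star})$, so that $F_{G_1,G_2}(x|\bm{\theta})=A(x)-B(x)$. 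Assuming $F_{G_1,G_2}(x|\bm{\theta})=F_{G_1,G_2}(x|\bm{\theta}^{\star})$ for every $x$, the defining identity rearranges to $A(x)-A^{\star}(x)=B(x)-B^{\star}(x)$ for all $x$.

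Granting the separation $A\equiv A^{\star}$ (equivalently $B\equiv B^{\star}$), the conclusion is routine. Because $\mathrm{\Phi}$ is strictly increasing, $A\equiv A^{\star}$ forces $G_1(x|\bm{\theta}_1)/[1-G_1(x|\bm{\theta}_1)]=G_1(x|\bm{\theta}_1^{\star})/[1-G_1(x|\bm{\theta}_1^{\star})]$ for all $x$; since $u\mapsto u/(1-u)$ is strictly increasing on $[0,1)$, this yields $G_1(x|\bm{\theta}_1)=G_1(x|\bm{\theta}_1^{\star})$ for all $x$, and the identifiability of $G_1$ gives $\bm{\theta}_1=\bm{\theta}_1^{\star}$. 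Symmetrically, $B\equiv B^{\star}$ together with the injectivity of $u\mapsto\log(1-u)$ and the identifiability of $G_2$ gives $\bm{\theta}_2=\bm{\theta}_2^{\star}$, hence $\bm{\theta}=\bm{\theta}^{\star}$. Observe that by exactly this reasoning the factor maps $\bm{\theta}_1\mapsto A$ and $\bm{\theta}_2\mapsto B$ are each injective; the whole difficulty is that injectivity of the two factors does not by itself imply injectivity of their difference $A-B$, since a sum of two monotone functions admits many decompositions.

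The separation $A\equiv A^{\star}$ is therefore the heart of the matter and the step I expect to be the main obstacle. The structural facts I would exploit are that the two terms occupy disjoint halves of the unit interval: from $G_1(x)/[1-G_1(x)]\ge 0$ one gets $A(x)\in[\tfrac12,1)$, while $\log[1-G_2(x)]\le 0$ gives $B(x)\in(0,\tfrac12]$; moreover $A$ is non-decreasing, $B$ is non-increasing, and both $A-A^{\star}$ and $B-B^{\star}$ vanish as $x\to\pm\infty$. I would first dispose of the trivial case $\bm{\theta}_1=\bm{\theta}_1^{\star}$ (then $A\equiv A^{\star}$ is immediate and the previous paragraph closes), and then argue by contradiction when $\bm{\theta}_1\neq\bm{\theta}_1^{\star}$, in which case $G_1(\cdot|\bm{\theta}_1)$ and $G_1(\cdot|\bm{\theta}_1^{\star})$ differ on a set of positive measure and, through the identity, force $G_2(\cdot|\bm{\theta}_2)\neq G_2(\cdot|\bm{\theta}_2^{\star})$ as well.

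To produce the contradiction I would distinguish according to the support, which by the support result quoted above is the union of the supports of $G_1$ and $G_2$. When the supports differ, on any interval where one baseline has saturated (its cdf identically $0$ or $1$) the corresponding term is constant, so the identity isolates the other baseline and propagates equality there, which one then extends by monotonicity and continuity. The genuinely delicate case—and the one relevant to the Normal-Weibull-Weibull model, where both baselines share the support $(0,\infty)$—is when the supports coincide, so that no saturated region is available. There I would pass to the germ of the identity at a boundary of the common support and match successive orders of the two smooth transforms $u\mapsto\mathrm{\Phi}(u/(1-u))$ and $u\mapsto\mathrm{\Phi}(\log(1-u))$, whose expansions agree only at first order, to rule out a nontrivial cancellation between a perturbation of $\bm{\theta}_1$ and one of $\bm{\theta}_2$. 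Making this last step fully rigorous for an arbitrary identifiable pair $(G_1,G_2)$ is the portion I expect to demand the most care.
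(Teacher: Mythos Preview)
Your decomposition $F_{G_1,G_2}=A-B$ and the injectivity chain for the easy cases match the paper's argument exactly: the paper, too, reduces $A\equiv A^{\star}$ to $G_1(\cdot\,|\,\bm{\theta}_1)=G_1(\cdot\,|\,\bm{\theta}_1^{\star})$ via the strict monotonicity of $\mathrm{\Phi}$ and of $u\mapsto u/(1-u)$, and disposes of the subcases $\bm{\theta}_1=\bm{\theta}_1^{\star}$ or $\bm{\theta}_2=\bm{\theta}_2^{\star}$ by direct cancellation, just as you do.

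Where you and the paper diverge is exactly the subcase you flag as the obstacle, $\bm{\theta}_1\neq\bm{\theta}_1^{\star}$ and $\bm{\theta}_2\neq\bm{\theta}_2^{\star}$. The paper dispatches it in a single sentence: ``then the assumption fails since $F_{G_1,G_2}(x|\bm{\theta})\neq F_{G_1,G_2}(x|\bm{\theta}^{\star})$ for almost all values of $x$ within the support.'' No mechanism is offered for why the identity $A-A^{\star}\equiv B-B^{\star}$ cannot hold; the clause after ``since'' is essentially the conclusion restated. Knowing that $A\not\equiv A^{\star}$ and $B\not\equiv B^{\star}$ separately (which is all the preceding paragraph establishes) does not by itself rule out $A-A^{\star}\equiv B-B^{\star}$, as you correctly point out. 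So your instinct that this is the genuinely nontrivial step is sound, and your outline---the range separation $A(x)\in[\tfrac12,1)$, $B(x)\in(0,\tfrac12]$, the opposite monotonicities, the vanishing of both differences at the endpoints, and a germ/asymptotic comparison of the two transforms near a boundary of the support---already goes further than the paper's own proof. You are not missing a slick argument the authors saw; the paper simply does not supply one.
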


\begin{proof} 
Assume that $\mathrm{\Phi} \left( \frac{G_1 (x|\bm{\theta}_1)}{1-G_1 (x|\bm{\theta}_1)} \right) = \mathrm{\Phi} \left( \frac{G_1 (x|\bm{\theta}_1^{\star})}{1-G_1 (x|\bm{\theta}_1^{\star})} \right)$, where $\{ \bm{\theta}_1 , \bm{\theta}_1^{\star}\} \subset \bm{\Theta}_1$ and $\bm{\theta}_1 \neq \bm{\theta}_1^{\star}$. Since $\mathrm{\Phi}$ is injective, $ \frac{G_1 (x|\bm{\theta}_1)}{1-G_1 (x|\bm{\theta}_1)} = \frac{G_1 (x|\bm{\theta}_1^{\star})}{1-G_1 (x|\bm{\theta}_1^{\star})} \Rightarrow G_1 (x|\bm{\theta}_1) = G_1 (x|\bm{\theta}_1^{\star})$; it is a contradiction, because it denies the identifiability of $G_1$. Therefore, if $\bm{\theta}_1 \neq \bm{\theta}_1^{\star}$ then $\mathrm{\Phi} \left( \frac{G_1 (x|\bm{\theta}_1)}{1-G_1 (x|\bm{\theta}_1)} \right) \neq \mathrm{\Phi} \left( \frac{G_1 (x|\bm{\theta}_1^{\star})}{1-G_1 (x|\bm{\theta}_1^{\star})} \right)$. Analogously, it is easy to verify that for $\{ \bm{\theta}_2 , \bm{\theta}_2^{\star}\} \subset \bm{\Theta}_2$, if $\bm{\theta}_2 \neq \bm{\theta}_2^{\star}$ then $\mathrm{\Phi} \left( \log[1-G_2(x|\bm{\theta}_2)] \right) \neq \mathrm{\Phi} \left( \log[1-G_2(x|\bm{\theta}_2^{\star})] \right)$.

Now consider $\{ \bm{\theta} , \bm{\theta}^{\star}\} \subset \bm{\Theta}$ such that $\bm{\theta} \neq \bm{\theta}^{\star}$ and assume that $F_{G_1,G_2}(x|\bm{\theta}) = F_{G_1,G_2}(x|\bm{\theta}^{\star})$. If $\bm{\theta}_1 = \bm{\theta}_1^{\star}$ and $\bm{\theta}_2 \neq \bm{\theta}_2^{\star}$, then we can infer from (\ref{eq:cdfnggdiff}) that $G_2(x|\bm{\theta}_2) = G_2(x|\bm{\theta}_2^{\star})$, namely, an absurd. Likewise, if $\bm{\theta}_1 \neq \bm{\theta}_1^{\star}$ and $\bm{\theta}_2 = \bm{\theta}_2^{\star}$, we get to similar contradiction. If $\bm{\theta}_1 \neq \bm{\theta}_1^{\star}$ and $\bm{\theta}_2 \neq \bm{\theta}_2^{\star}$, then the assumption fails since $F_{G_1,G_2}(x|\bm{\theta}) \neq F_{G_1,G_2}(x|\bm{\theta}^{\star})$ for almost all values of $x$ within the support. Therefore, $F_{G_1,G_2}$ is identifiable. 

\end{proof}

\subsection{Series representation}
The normal cdf can be written in terms of the error function erf as follows:
\begin{equation} \label{eq:cdferfngg}
 \mathrm{\Phi}(z) = \frac{1}{2} \left[ 1 + \mathrm{erf} \left( \frac{z}{\sqrt{2}} \right) \right]\;, 
\end{equation} 
where $\mathrm{erf}(z) = \frac{2}{\sqrt{\pi}} \int_0^z e^{-t^2} \mathrm{d}t$. Since $\mathrm{erf}(z/ \sqrt{2} )$ may be linearly represented by:
\begin{align}
 \mathrm{erf} \left( \frac{z}{\sqrt{2}} \right) & = \frac{2}{\sqrt{\pi}} \sum_{n=0}^{\infty} \frac{(-1)^n \cdot (z/\sqrt{2})^{2n+1}}{n! (2n+1)} \nonumber \\
	& = \sqrt{\frac{2}{\pi}} \cdot \sum_{n=0}^{\infty} \left(-\frac{1}{2} \right)^n \frac{z^{2n+1}}{n! (2n+1)}\;, \label{eq:erf2}
\end{align} replacing Equation~(\ref{eq:erf2}) in Equation~(\ref{eq:cdferfngg}), we get to: 
\begin{equation} 
 \mathrm{\Phi}(z) = \frac{1}{2} + \frac{1}{\sqrt{2 \pi}} \sum_{n=0}^{\infty} \left( -\frac{1}{2} \right)^n \frac{z^{2n+1}}{n! (2n+1)} \;. \label{eq:cdf2} 
\end{equation}
Now using the result of Equation~(\ref{eq:cdf2}) in Equation~(\ref{eq:cdfnggdiff}), we have:
\begin{align}
 F_{G_1,G_2}(x) = \sum_{n=0}^{\infty} \frac{(-1/2)^n}{n!(2n+1)\sqrt{2\pi}} \left[ \underbrace{\left( \frac{G_1(x)}{1-G_1(x)} \right)^{2n+1}}_{\mathrm{A1}} - \underbrace{\left( \log[1-G_2(x)] \right)^{2n+1}}_{\mathrm{A2}} \right]. \label{eq:ngg001}
 \end{align}

A well-known result on power series raised to a positive integer $N$ states that:
\begin{equation}
 \left( \sum_{k=0}^{\infty} a_k y^k \right)^N = \sum_{k=0}^{\infty} c_k y^k\;, \label{eq:seriesraised}
\end{equation} where $c_0 = a_0^N$, $c_k = \frac{1}{k a_0} \sum_{s=1}^k (sN - k + s) a_s c_{k-s}$ for $k \geq 1$ and $N \in \mathbb{N}$.
Setting $N = 2n+1$, $y=G_1(x)$ and $a_k = 1$ for all $k \geq 0$, we can use the result in Equation~(\ref{eq:seriesraised}) to rewrite A1 in Equation~(\ref{eq:ngg001}):
\begin{align} 
\mathrm{A1} & = G_1(x)^{2n+1} \left( \frac{1}{1-G_1(x)} \right)^{2n+1} = G_1(x)^{2n+1} \left( \sum_{k=0}^{\infty} G_1(x)^k \right)^{2n+1} \nonumber \\
  & = G_1(x)^{2n+1} \sum_{k=0}^{\infty} c_{1,k} \cdot G_1(x)^k = \sum_{k=0}^{\infty} c_{1,k} \cdot G_1(x)^{k+2n+1}\;, \label{eq:a1ngg}
\end{align} 
such that $c_{1,0} = 1$ and $c_{1,k} = \frac{1}{k} \sum_{s=1}^k ( 2s[n+1]-k ) c_{1,k-s}$ for $k \geq 1$.
Equation~(\ref{eq:seriesraised}) also allows us to rewrite A2 in Equation~\ref{eq:ngg001} as follows:
\begin{align}
 \mathrm{A2} &= \left(- \sum_{m=1}^{\infty} \frac{G_2(x)^m}{m} \right)^{2n+1} = -\left( \sum_{k=0}^{\infty} \frac{G_2(x)^{k+1}}{k+1} \right)^{2n+1} \nonumber \\
 & = - G_2(x)^{2n+1} \left( \sum_{k=0}^{\infty} \frac{G_2(x)^{k}}{k+1} \right)^{2n+1} = -G_2(x)^{2n+1} \sum_{k=0}^{\infty} c_{2,k} \cdot G_2(x)^k \nonumber \\
 & = - \sum_{k=0}^{\infty} c_{2,k} \cdot G_2(x)^{k+2n+1} \label{eq:a2ngg}
\end{align} where $c_{2,0}=1$ and $c_{2,k} = \frac{1}{k} \sum_{s=1}^{k} \frac{2s(n+1)-k}{s+1} c_{2,k-s}$ for $k \geq 1$. Now inserting (\ref{eq:a1ngg}) and (\ref{eq:a2ngg}) in (\ref{eq:ngg001}), we have:
\begin{align}
 F_{G_1,G_2}(x) = \sum_{i=1}^2 \sum_{n,k=0}^{\infty} c_{i,n,k} \cdot G_i(x)^{k+2n+1} \label{eq:ngg003}
\end{align} where $c_{1,n,k} = \frac{(-1/2)^n}{n!(2n+1)\sqrt{2\pi}} c_{1,k}$ and $c_{2,n,k} = \frac{(-1/2)^n}{n!(2n+1)\sqrt{2\pi}} c_{2,k}$. Using Fubini's theorem on differentiation we can write the derivative of (\ref{eq:ngg003}) as follows:
\begin{equation} \label{eq:ngg004}
 f_{G_1,G_2}(x) = \sum_{i=1}^2 \sum_{n,k=0}^{\infty} c_{i,n,k} \cdot g_{i,k+2n+1}(x) 
\end{equation} where $g_{i,k+2n+1}(x)=(k+2n+1) g_i(x) G_i(x)^{k+2n}$ is the pdf of a random variable from the exponentiated family \citep{Mudholkar1993}. Thus, we can say that~(\ref{eq:ngg004}) is the Normal-$(G_1,G_2)$ pdf~(\ref{eq:pdfngg}) expressed as a linear combination of pdfs of exponentiated distributions.

\subsection{Raw moments, incomplete moments and moment generating function}

Given that $X$ is a random variable following a distribution from the normal-$(G_1,G_2)$ class, we can use (\ref{eq:ngg004}) to write the $r$-th raw moment of $X$ as follows:
\begin{align}
 E(X^r) & = \sum_{i=1}^2 \sum_{n,k=0}^{\infty} c_{i,n,k} \int_{-\infty}^{\infty} x^r g_{i,k+2n+1}(x) \mathrm{d}x \label{eq:nggmoment0} \\ 
 & = \sum_{i=1}^2 \sum_{n,k=0}^{\infty} c_{i,n,k} E(Y_{i,k+2n+1}^r) \label{eq:nggmoment}
\end{align} where $Y_{i,k+2n+1}$ follows the exponentiated distribution whose pdf is $g_{i,k+2n+1}$. 

Let $Q_i$ be the quantile function of the baseline $G_i$. Replacing $x$ in (\ref{eq:nggmoment0}) by $Q_i \left( v^{{1}/{k+2n+1}} \right)$ we can also represent (\ref{eq:nggmoment}) as:
\begin{equation*}
 E(X^r) = \sum_{i=1}^2 \sum_{n,k=0}^{\infty} c_{i,n,k} \int_0^1 \left[ Q_i \left( v^{{1}/{k+2n+1}} \right) \right]^r \mathrm{d}v\,.
\end{equation*}

Similarly, one can write the $r$-th incomplete moment of $X$ as follows: 
\begin{align}
 m_r(z) & = \int_{-\infty}^z x^r f_{G_1,G_2}(x) \mathrm{d}x = \sum_{i=1}^2 \sum_{n,k=0}^{\infty} c_{i,n,k} m_r^{\star}(z) \nonumber \\
 & = \sum_{i=1}^2 \sum_{n,k=0}^{\infty} c_{i,n,k} \int_0^{\left[ G_i (z) \right]^{k+2n+1}} \left[ Q_i \left( v^{{1}/{k+2n+1}} \right) \right]^r \mathrm{d}v \nonumber
\end{align} where $m_r^{\star}(z)$ is the $r$-th incomplete moment of $Y_{i,k+2n+1}$ mentioned above. 
 
The moment generating function (mgf) of $X$ is denoted by $M_X(t)=E \left( e^{tX} \right)$. It can be determined from (\ref{eq:ngg004}) as: 
\begin{align}
 M_X(t) & = \sum_{i=1}^2 \sum_{n,k=0}^{\infty} c_{i,n,k} \int_{-\infty}^{\infty} e^{tx} g_{i,k+2n+1}(x) \mathrm{d}x \nonumber \\
 & = \sum_{i=1}^2 \sum_{n,k=0}^{\infty} c_{i,n,k} M_{Y_{k+2n+1}}(t)\,, \nonumber
\end{align} where $M_{Y_{k+2n+1}}(t)$ is the mgf of $Y_{i,k+2n+1}$. 

Other meaningful quantities, as the characteristic function, the probability-weighted moments, the R\'enyi entropy and the order statistics can be derived likewise by using (\ref{eq:ngg004}).


\subsection{Estimation and Inference} \label{sec:ngginference}

Let $\bm{X}=(x_1,\ldots,x_n)$ be a complete random sample of size $n$ from the random variable $X \sim$ normal-$(G_1,G_2)$. Given that $\bm{\theta}_1 = (\theta_1,\ldots,\theta_r)^{\top}$ is the $r \times 1$ parametric vector associated with $G_1(x)=G_1(x|\bm{\theta}_1)$, $\bm{\theta}_2 = (\theta_{r+1},\ldots,\theta_{r+m})^{\top}$ is the $m \times 1$ parametric vector associated with $G_2(x)=G_2(x|\bm{\theta}_2)$ and $f_{G_1,G_2}(x)=f_{G_1,G_2}(x|\bm{\theta})$ where $\bm{\theta} = (\theta_1,\ldots,\theta_r,\theta_{r+1},\ldots,\theta_{r+m})^{\top}$, we can write the log-likelihood function of $X$ as follows:
\begin{equation*}
 \ell(\bm{\theta}|\bm{X}) = \sum_{i=1}^n \log \left\{ \phi \left(\frac{G_1(x_i)}{1-G_1(x_i)} \right) \frac{g_1(x_i)}{[1-G_1(x_i)]^2} + \phi \left( \log[1-G_2(x_i)] \right) \frac{g_2(x_i)}{1-G_2(x_i)} \right\}\,. 
\end{equation*} The solution of the system of equations $U(\bm{\theta}|\bm{X})=\bm{0}_{r+m}$ provides the maximum likelihood estimates (MLEs) for $\bm{\theta}$, where $\bm{0}_{r+m}$ is an $(r+m)\times 1$ vector of zeros and $U(\bm{\theta}|\bm{X}) = \nabla_{\bm{\theta}} \ell (\bm{\theta}|\bm{X})$ is the score vector. The elements of $U(\bm{\theta}|\bm{X})=(u_j)_{1 \leq j \leq r+m}$ are:
\begin{multline*}
 u_j = \sum_{i=1}^n \frac{1}{f_{G_1,G_2}(x_i)} \phi \left(\frac{G_1(x_i)}{1-G_1(x_i)} \right) \frac{1}{(1-G_1(x_i))^2} \left[ \frac{\partial}{\partial \theta_j} {g_1}(x_i) \right. \\ 
 \left. + \frac{g_1(x_i)}{1-G_1(x_i)} \left( 2- \frac{G_1(x_i)}{[1-G_1(x_i)]^2} \right) \frac{\partial}{\partial \theta_j} G_1(x_i) \right],\; \mathrm{for}\; 1 \leq j \leq r
\end{multline*} 
and
\begin{multline*}
 u_j = \sum_{i=1}^n \frac{1}{f_{G_1,G_2}(x_i)} \frac{\phi \left( \log[1-G_2(x_i)] \right)}{1-G_2(x_i)} \left[ \frac{\partial}{\partial \theta_j} g_2(x_i) \right. \\
 \left. + \left( 1+\log[1-G_2(x_i)] \right) \frac{g_2(x_i)}{1-G_2(x_i)} \frac{\partial}{\partial \theta_j} G_2(x_i) \right],\; \mathrm{for}\; r < j \leq r+m\,.
\end{multline*}  

For testing hypotheses and constructing confidence intervals for $\bm{\theta}$, the information matrix $J(\bm{\theta}|\bm{X})$ is needed. The expectation of $J(\bm{\theta}|\bm{X})$, denoted by $\mathcal{I}_{\bm{\theta}}$, is the expected Fisher information matrix. Given that certain conditions of regularity are fulfilled, the quantity $\sqrt{n}( \widehat{ \bm{\theta}} - \bm{\theta})$ follows approximately a multivariate normal distribution $N_{r+m}(\mathbf{0}_{r+m},\mathcal{I}_{\bm{\theta}}^{-1})$.~\ref{app:a} brings the expression for $J(\bm{\theta}|\bm{X})$.

\section{The proposed model} \label{sec:model}
The Weibull cdf is given by $G_W(x|k,\lambda) = 1-e^{-(x/\lambda)^k}$, for $x \geq 0$, $k >0$ and $\lambda>0$. Replacing $G_1$ and $G_2$ in (\ref{eq:cdfnggdiff}) by $G_W(x|k_1, \lambda_1)$ and $G_W(x|k_2,\lambda_2)$ respectively, we get to the cdf of the Normal-Weibull-Weibull distribution (NWW, for short):
\begin{align}
 F_{NWW}(x|\bm{\theta}) =  \mathrm{\Phi} \left( e^{(x/\lambda_1)^{k_1}}-1 \right) - \mathrm{\Phi} \left( -\left( \frac{x}{\lambda_2} \right)^{k_2} \right)\,, \nonumber
\end{align} where $\bm{\theta}=(k_1,\lambda_1,k_2,\lambda_2)^{\top}$. The corresponding pdf can be obtained using (\ref{eq:pdfngg}):
\begin{equation}
 f_{NWW}(x|\bm{\theta}) = \phi \left( e^{(x/\lambda_1)^{k_1}}-1 \right) \frac{k_1}{\lambda_1} \left( \frac{x}{\lambda_1} \right)^{k_1-1} e^{(x/\lambda_1)^{k_1}}
 + \phi \left( -\left( \frac{x}{\lambda_2} \right)^{k_2} \right) \frac{k_2}{\lambda_2} \left( \frac{x}{\lambda_2} \right)^{k_2-1}.
\end{equation}

Figure~\ref{fig:nwwpdf} displays some plots of the NWW pdf for different values of the parameters. The distribution is able to fit unimodal right-skewed data (top-left chart) and also left-skewed data (top-right chart). Notice the different shapes of the bimodal curves in the remaining charts. For instance, in the bottom-left chart, the vertical distance between the modes and the local minimum in the purple curve is much greater than in the green one. We may also notice that $\lambda_1$ and $\lambda_2$ somehow behave like shape parameters, as in the original Weibull baselines, controlling the shape of the ``bells'' (compare purple and gray curves to see the effect of varying $\lambda_1$; same for blue and red curves concerning $\lambda_2$).     


\begin{figure}[h]
 \centering
 \includegraphics[scale=0.64]{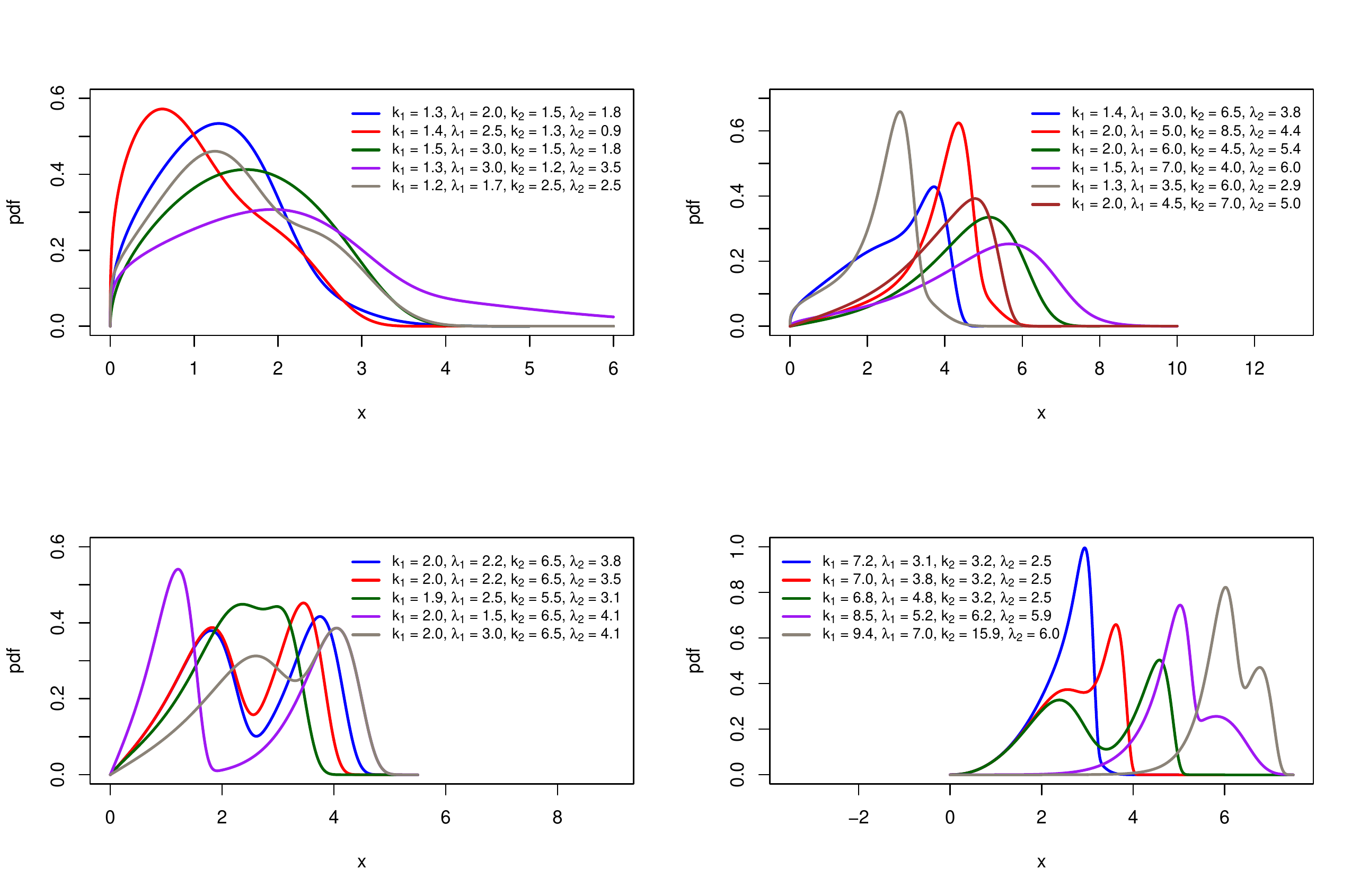}
 \caption{Plots of the Normal-Weibull-Weibull pdf.}
 \label{fig:nwwpdf}
\end{figure}

\newpage

\section{Simulation}

Performing Monte Carlo simulation studies is considerably relevant whenever one wants to test and confirm assumptions on new statistical methods. In this work, we want to investigate the behavior of the estimates of the parameters of the NWW distribution under the method of maximum likelihood. For this purpose, we used the software R version 3.4.4~\citep{r2018}.

Initially, we employed the Von Neumann's acceptance-rejection method~\citep{Neumann} to generate pseudo-random samples from the NWW distribution; this simple method requires only the corresponding pdf $y=f(x)$, a minorant and a majorant for $x$ and a majorant for $y$. The procedure was replicated 10,000 times and at each replication, four different sample sizes were considered, namely, $n=50$, $100$, $200$ and $500$. We examined scenarios with four different values of the parametric vector $\bm{\theta}=(k_1,\lambda_1,k_2,\lambda_2)^{\top}$, which are presented from the second to fifth columns of Tables~\ref{tab:nggbias} and~\ref{tab:nggmse}. 

For each scenario, we calculated the bias and the mean squared error (MSE) as follows:
\begin{equation*}
 {\mathrm{Bias}}_i = \frac{1}{10000} \sum_{j=1}^{10000} \left( {\widehat{\theta}}_{ij} - \theta_i \right)\;, \qquad {\mathrm{MSE}_i} = \frac{1}{10000} \sum_{j=1}^{10000} \left( {\widehat{\theta}}_{ij} - \theta_i \right)^2
\end{equation*} where $\theta_i$ is the $i$-th element of $\bm{\theta}$ and ${\widehat{\theta}}_{ij}$ is the estimate for $\theta_i$ at the $j$-th replication.

The global maximum of the log-likelihood function was found by using the L-BFGS-B algorithm. It is based on the gradient projection and uses a limited memory BFGS matrix to approximate the Hessian of the objective function~\citep{Byrd1995}. 

Besides presenting small values, the desired behavior for both bias and MSE is to decrease inasmuch as the sample size increases. According to Tables~\ref{tab:nggbias} and~\ref{tab:nggmse}, the values of bias and MSE for all the estimated parameters are small and the greater the sample size, the smaller the values. Thus, the results presented in this section indicate that the MLEs of the parameters of the NWW distribution are well-behaved when calculated using the L-BFGS-B algorithm.   

\begin{table}[h!]
\caption{Bias of the estimates under the maximum likelihood method for the NWW model.}
\label{tab:nggbias} 
\centering
\begin{tabular}{lcccccccc}
\hline\noalign{\smallskip}
&\multicolumn{4}{c}{Actual value} & \multicolumn{4}{c}{Bias} \\
\cmidrule(lr){2-5} \cmidrule(lr){6-9}
$n$ & $k_1$ & $\lambda_1$ & $k_2$ & $\lambda_2$ & $\widehat k_1$ & $\widehat \lambda_1$ & $\widehat k_2$ & $\widehat \lambda_2$ \\
\noalign{\smallskip}\hline\noalign{\smallskip}
50	&	1.3	&	2	&	1.5	&	1.8	&	0.54169	&	0.30227	&	0.42743	&	0.29174 \\
	&	3	&	1.5	&	2.8	&	2.5	&	0.55544	&	0.23582	&	0.63534	&	0.2146 \\
	&	2	&	2.2	&	6.5	&	4.1	&	0.30213	&	0.1419	&	1.42262	&	0.12017 \\
	&	1.4	&	1.6	&	4.8	&	5.1	&	0.18208	&	0.10899	&	0.86019	&	0.13781 \\
	\cmidrule{1-9}
100	&	1.3	&	2	&	1.5	&	1.8	&	0.34967	&	0.24391	&	0.31488	&	0.22752 \\
	&	3	&	1.5	&	2.8	&	2.5	&	0.43866	&	0.1199	&	0.41435	&	0.13804 \\
	&	2	&	2.2	&	6.5	&	4.1	&	0.24521	&	0.11322	&	0.89279	&	0.08923 \\
	&	1.4	&	1.6	&	4.8	&	5.1	&	0.13014	&	0.07167	&	0.54369	&	0.09648 \\
	\cmidrule{1-9}
200	&	1.3	&	2	&	1.5	&	1.8	&	0.22345	&	0.18795	&	0.24873	&	0.17187 \\
	&	3	&	1.5	&	2.8	&	2.5	&	0.28333	&	0.04346	&	0.26341	&	0.07261 \\
	&	2	&	2.2	&	6.5	&	4.1	&	0.161	&	0.06349	&	0.57998	&	0.05034 \\
	&	1.4	&	1.6	&	4.8	&	5.1	&	0.09163	&	0.04958	&	0.37906	&	0.069 \\
    \cmidrule{1-9}
500	&	1.3	&	2	&	1.5	&	1.8	&	0.13429	&	0.11752	&	0.17774	&	0.11164 \\
	&	3	&	1.5	&	2.8	&	2.5	&	0.16708	&	0.01806	&	0.17836	&	0.04992 \\
	&	2	&	2.2	&	6.5	&	4.1	&	0.09747	&	0.03622	&	0.34572	&	0.0277 \\
	&	1.4	&	1.6	&	4.8	&	5.1	&	0.06333	&	0.03001	&	0.22205	&	0.04259 \\
\noalign{\smallskip}\hline
\end{tabular}
\end{table}     

\begin{table}[h!]
\caption{MSE of the estimates under the maximum likelihood method for the NWW model.}
\label{tab:nggmse} 
\centering
\begin{tabular}{lcccccccc}
\hline\noalign{\smallskip}
&\multicolumn{4}{c}{Actual value} & \multicolumn{4}{c}{MSE} \\
\cmidrule(lr){2-5} \cmidrule(lr){6-9}
$n$ & $k_1$ & $\lambda_1$ & $k_2$ & $\lambda_2$ & $\widehat k_1$ & $\widehat \lambda_1$ & $\widehat k_2$ & $\widehat \lambda_2$ \\
\noalign{\smallskip}\hline\noalign{\smallskip}
50 &	1.3	&	2	&	1.5	&	1.8	&	0.5477	&	0.13969	&	0.37049	&	0.13065	\\
 &	3	&	1.5	&	2.8	&	2.5	&	0.43676	&	0.2146	&	0.88316	&	0.11482	\\
 &	2	&	2.2	&	6.5	&	4.1	&	0.1976	&	0.07682	&	3.60838	&	0.05433	\\
 &	1.4	&	1.6	&	4.8	&	5.1	&	0.05343	&	0.02546	&	1.32698	&	0.03216	\\
 \cmidrule{1-9}
100 &	1.3	&	2	&	1.5	&	1.8	&	0.23088	&	0.09363	&	0.17553	&	0.0809	\\
 &	3	&	1.5	&	2.8	&	2.5	&	0.27616	&	0.09641	&	0.30237	&	0.0571	\\
 &	2	&	2.2	&	6.5	&	4.1	&	0.16973	&	0.07868	&	1.39857	&	0.06496	\\
 &	1.4	&	1.6	&	4.8	&	5.1	&	0.02719	&	0.00888	&	0.49768	&	0.01486	\\
 \cmidrule{1-9}
200 &	1.3	&	2	&	1.5	&	1.8	&	0.0914	&	0.0588	&	0.10077	&	0.04583	\\
 &	3	&	1.5	&	2.8	&	2.5	&	0.12419	&	0.01817	&	0.11285	&	0.01472	\\
 &	2	&	2.2	&	6.5	&	4.1	&	0.05914	&	0.02211	&	0.57197	&	0.01905	\\
 &	1.4	&	1.6	&	4.8	&	5.1	&	0.01352	&	0.00396	&	0.23394	&	0.00751	\\
 \cmidrule{1-9}
500 &	1.3	&	2	&	1.5	&	1.8	&	0.03154	&	0.02332	&	0.05049	&	0.01887	\\
 &	3	&	1.5	&	2.8	&	2.5	&	0.04441	&	0.00052	&	0.04762	&	0.00383	\\
 &	2	&	2.2	&	6.5	&	4.1	&	0.01723	&	0.00406	&	0.19449	&	0.00315	\\
 &	1.4	&	1.6	&	4.8	&	5.1	&	0.00633	&	0.00143	&	0.07816	&	0.00286	\\
\noalign{\smallskip}\hline
\end{tabular}
\end{table}  

\section{Results and Discussion}
The hourly wind speed data measured at 10 m above ground level were collected by the National Institute of Meteorology of Brazil (INMET). The anemometers used for measuring the wind speed (in m/s) are installed in stations located in five cities spread in four states of the Brazilian Northeastern Region, as illustrated in Figure~\ref{fig:nordeste} (blue dots indicate the geographical position of the stations). Esperantina (denoted by Station 1) is located in the north part of the State of Piau\'i. Jaguaruana (denoted by Station 2) is located in the mesoregion of Jaguaribe in the State of Cear\'a. Cabaceiras (denoted by Station 3) and Monteiro (denoted by Station 4) are located in the mesoregion of Borborema, State of Para\'iba. Arapiraca (denoted by Station 5) is located in the mesoregion of Agreste, State of Alagoas. Table~\ref{tab:ngg_description} brings further details about the stations and the years of wind data available.

\begin{figure}
 \centering
 \includegraphics[scale=0.55]{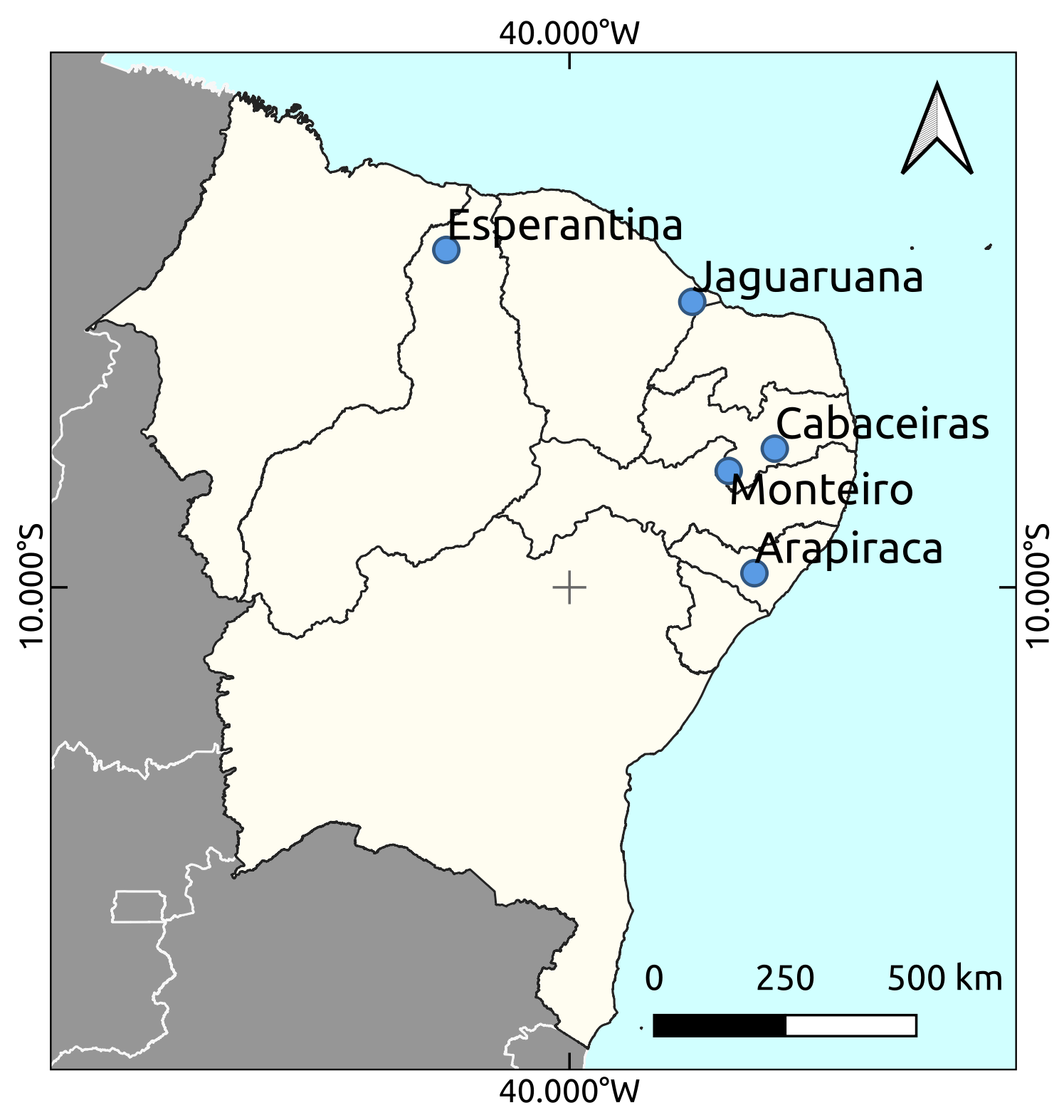}
 \caption{Northeastern Region of Brazil and geographical position of the stations.}
 \label{fig:nordeste}
\end{figure}

\begin{table}[h]
\caption{Details of the regions where the wind speed was measured.}
\label{tab:ngg_description} 
\centering
\begin{tabular}{llcccc}
\hline\noalign{\smallskip}
& Station & Latitude & Longitude & Altitude (m) & Period  \\
\noalign{\smallskip}\hline\noalign{\smallskip}
1 & Esperantina & $3^{\circ} 54' 07''$S & $42^{\circ} 14' 02''$W & 59 & 2007--2018 \\
2 & Jaguaruana & $4^{\circ} 50' 02''$S & $37^{\circ} 46' 51''$W & 20 & 2007--2018\\
3 & Cabaceiras & $7^{\circ} 29' 20''$S & $36^{\circ} 17' 13''$W & 382 & 2008--2018\\
4 & Monteiro & $7^{\circ} 53' 20''$S & $37^{\circ} 07' 12''$W & 599 & 2007--2018\\
5 & Arapiraca & $9^{\circ} 45' 07''$S & $36^{\circ} 39' 39''$W & 264 & 2008--2018\\
\noalign{\smallskip}\hline
\end{tabular}
\end{table}

As we can see in Table~\ref{tab:ngg_descriptive}, Station 4 has the highest value of the mean among the stations in the study, whereas the highest value of variance belongs to Station 2. Except for Station 3, whose skewness is negative, the remaining stations have different degrees of positive skewness. On the other hand, Station 1 has the only positive value of kurtosis and Station 5 has the lowest one. Thus, the descriptive statistics indicate that the statistical characteristics of the wind speed differ from station to station.     

\begin{table}[h!]
\caption{Descriptive statistics.}
\label{tab:ngg_descriptive} 
\centering
\begin{tabular}{lccccccrr}
\hline\noalign{\smallskip}
St. & $n$ & mean & median & min & max & variance & skewness & kurtosis \\
\noalign{\smallskip}\hline\noalign{\smallskip}
1 & 72637 & 1.56472 & 1.5 & 0.1 & 8.8 & 0.96082 & 0.74469 & 0.61585 \\
2 & 71797 & 3.09401 & 3 & 0.1 & 9.4 & 2.94352 & 0.25522 & $-0.59481$\\
3 & 83953 & 3.21937 & 3.3 & 0.1 & 9.9 & 2.49808 & $-0.08484$ & $-0.70967$\\
4 & 76738 & 3.28302 & 3.3 & 0.1 & 9.6 & 2.60513 & 0.16094 & $-0.50692$\\
5 & 72675 & 2.87482 & 2.8 & 0.1 & 9.3 & 2.80844 & 0.17049 & $-0.92422$\\
\noalign{\smallskip}\hline
\end{tabular}
\end{table}

We calculated the estimates of the parameters under the method of maximum likelihood for five distributions. Besides fitting the proposed model~(\ref{sec:model}), we fitted the Normal-Normal mixture model (NN), the Weibull-Weibull mixture model (WW), the Normal distribution (N) and the Weibull distribution (W) to each one of the five datasets. Table~\ref{tab:ngg_par1} presents the MLEs along with the respective standard errors in parentheses. The global maximum of the log-likelihood function was found using the L-BFGS-B algorithm~\citep{Byrd1995} for the distribution NWW, whereas the optimization concerning the mixture models NN and WW was performed along the lines of the EM-algorithm presented in \citet{Nguyen2018}. The standard errors are small in all scenarios, suggesting that the estimates in Table~\ref{tab:ngg_par1} are fairly accurate for the five distributions.     

\begin{table}[h!]
\caption{Estimates and standard errors in parentheses.}
\label{tab:ngg_par1} 
\centering
\begin{tabular}{llrrrrrr}
\hline\noalign{\smallskip}
Distr. & Par. & St1 & St2 & St3 & St4 & St5 \\
\noalign{\smallskip}\hline\noalign{\smallskip}
NWW	&	$k_1$	&	1.09455	  &  0.97219	&	0.93287	    &	1.16262	    &	0.92095	\\
	&		    &	(0.0066)  &  (0.0047)	&	(0.0035)    &	(0.0051)	&	(0.0041)	\\
 & $\lambda_1$  &	2.44439	  & 4.14273	    &	5.11736	    &	4.61159	    &	3.26931	\\
	&		    &	(0.0116)  & (0.0300)	&	(0.0151)	&	(0.0358)	&	(0.0157)	\\
	&	$k_2$	&	1.24356   & 2.21499	    &	2.74024	    &	2.17717	    &	2.87659	\\
	&		    &	(0.0090)  & (0.0181)	&	(0.0141)	&	(0.0232)	&	(0.0135)	\\
 & $\lambda_2$	&	2.23302   & 4.82613	    &	4.47508	    &	4.74835	    &	4.76046	\\
	&		    &	(0.0104)  & (0.0129)	&	(0.0083)	&	(0.0187)	&	(0.0059)	\\
NN	&	$\mu_1$	&	0.99225	& 1.25346	&	1.05246	&	1.52464	&	1.09819	\\
	&		&	(0.0088)	& (0.0112)	&	(0.0135)	&	(0.0152)	&	(0.0091)	\\
	&	$\sigma_1$	&	0.57673	&	0.73618	&	0.64260	&	0.80711	&	0.66916	\\
	&		&	(0.0051)	&  (0.0067)	&	(0.0077)	&	(0.0082)	&	(0.0056)	\\
	&	$\mu_2$	&	2.15056	&  3.69950	&	3.71423	&	3.78985	&	3.69032	\\
	&		&	(0.0121)	&  (0.0110)	&	(0.0094)	&	(0.0127)	&	(0.0118)	\\
	&	$\sigma_2$	&	0.96181	&  1.50081	&	1.28725	&	1.42131	&	1.33344	\\
	&		&	(0.0038)	&  (0.0051)	&	(0.0054)	&	(0.0056)	&	(0.0063)	\\
	&	$w$	&	0.50577	&  0.24754	&	0.18591	&	0.22375	&	0.31460	\\
	&		&	(0.0083)	&  (0.0041)	&	(0.0036)	&	(0.0054)	&	(0.0041)	\\
WW	&	$k_1$	&	2.46232	&  2.61226	&	3.48661	&	2.71890	&	3.63802	\\
	&		&	(0.0451)	&  (0.0162)	&	(0.0183)	&	(0.0146)	&	(0.0284)	\\
	&	$\lambda_1$	&	2.10941	&  4.09404	&	4.26754	&	4.08741	&	4.41073	\\
	&		&	(0.0089)	&  (0.0126)	&	(0.0070)	&	(0.0103)	&	(0.0110)	\\
	&	$k_2$	&	1.27757	&  1.20035	&	1.30110	&	1.26447	&	1.36100	\\
	&		&	(0.0106)	&  (0.0088)	&	(0.0069)	&	(0.0118)	&	(0.0056)	\\
	&	$\lambda_2$	&	1.42208	&  1.54955	&	1.81537	&	1.65794	&	1.86015	\\
	&		&	(0.0186)	&  (0.0392)	&	(0.0242)	&	(0.0526)	&	(0.0183)	\\
	&	$w$	&	0.44618	&  0.75154	&	0.71389	&	0.83225	&	0.51531	\\
	&		&	(0.0183)	&  (0.0075)	&	(0.0046)	&	(0.0069)	&	(0.0056)	\\
N	&	$\mu$	&	1.56471	&  3.09401	&	3.21937	&	3.28301	&	2.87482	\\
	&		&	(0.0036)	&  (0.0064)	&	(0.0054)	&	(0.0058)	&	(0.0062)	\\
	&	$\sigma$	&	0.98021	&  1.71565	&	1.58052	&	1.61403	&	1.67582	\\
	&		&	(0.0025)	& 	(0.0045)	&	(0.0038)	&	(0.0041)	&	(0.0043)	\\
W	&	$k$	&	1.58959	&	1.77193	&	2.03559	&	2.07474	&	1.65091	\\
	&		&	(0.0047)	& (0.0054)	&	(0.0059)	&	(0.0061)	&	(0.0051)	\\
	&	$\lambda$	&	1.73825	& 3.44821	&	3.59712	&	3.68213	&	3.18827	\\
	&		&	(0.0042) &	(0.0075)	&	(0.0063)	&	(0.0066)	&	(0.0074)	\\
\noalign{\smallskip}\hline
\end{tabular}
\end{table}

Four information criteria were used to perform comparisons among the fitted models. Generally, such criteria indicate that the best model is the one presenting the lowest value, since they are related to the amount of information lost by a given model. We used the well-known Akaike information criterion (AIC), consistent Akaike information criterion (CAIC), Bayesian information criterion (BIC) and Hannan-Quinn information criterion (HQIC). 
The statistics of Anderson-Darling (A$^*$) and Cram\'er-von Mises (W$^*$)~\citep{Chen1995} were also used to compare the fitted models. Since these statistics are measures of the difference between the empirical distribution function and the real underlying cdf, it is reasonable to say that the smaller their values, the better the fit. Table~\ref{tab:ngg_gof01} brings the aforementioned goodness-of-fit measures for the five cited models fitted to each station. 

\begin{table}[h!]
\caption{Goodness-of-fit measures.}
\label{tab:ngg_gof01} 
\centering
\begin{tabular}{llrrrrrr}
\hline\noalign{\smallskip}
Crit. & Distr. & St1 & St2 & St3 & St4 & St5 \\
\noalign{\smallskip}\hline\noalign{\smallskip}
AIC	&	NWW	& 189324.5 & 272976.1 & 306366.6 & 286607.1 & 268213.1 \\
	&	NN	&	197221.2	&	277168.5	&	310138.0	&	289011.9	&	273287.3 \\
	&	WW	&	189379.2	&	273075.2	&	306370.5	&	286647.3	&	268244.1 \\
	&	N	&	203235.1	&	281266.5	&	315112.0	&	291251.2	&	281292.0 \\
	&	W	&	190666.7	&	278737.7	&	319347.3	&	291232.9	&	277572.1 \\
CAIC	&	NWW	&	189365.2	&	273016.8	&	306408.0	&	286648.1	&	268253.8	\\
	&	NN	&	197272.2	&	277219.4	&	310189.7	&	289063.1	&	273338.3	\\
	&	WW	&	189430.2	&	273126.1	&	306422.2	&	286698.5	&	268295.1	\\
	&	N	&	203255.4	&	281286.9	&	315132.7	&	291271.7	&	281312.4	\\
	&	W	&	190687.1	&	278758.0	&	319368.0	&	291253.4	&	277592.5	\\
BIC	&	NWW	& 189361.2 & 273012.8 & 306404.0 & 286644.1 & 268249.8 \\
	&	NN	&	197267.2	&	277214.4	&	310184.7	&	289058.1	&	273333.3 \\
	&	WW	&	189425.2	&	273121.1	&	306417.2	&	286693.5	&	268290.1 \\
	&	N	&	203253.4	&	281284.9	&	315130.7	&	291269.7	&	281310.4 \\
	&	W	&	190685.1	&	278756.0	&	319366.0	&	291251.4	&	277590.5 \\
HQIC & NWW	& 189335.8 & 272987.4 & 306378.0 & 286618.5 & 268224.4 \\
	&	NN	&	197235.3	&	277182.7	&	310152.3	&	289026.1	&	273301.4 \\
	&	WW	&	189393.4	&	273089.3	&	306384.8	&	286661.5	&	268258.3 \\
	&	N	&	203240.7	&	281272.2	&	315117.7	&	291256.9	&	281297.7 \\
	&	W	&	190672.4	&	278743.3	&	319353.0	&	291238.6	&	277577.7 \\
A*	&	NWW	& 71.78 & 22.43 & 22.60 & 19.31 & 30.98 \\
	&	NN	&	181.13	&	51.72	&	54.33	&	27.79	&	78.72 \\
	&	WW	&	81.27	&	31.05	&	32.44	&	24.62	&	50.25 \\
	&	N	&	531.70	&	230.18	&	303.82	&	118.73	&	494.43 \\
	&	W	&	257.57	&	471.01	&	1117.26	&	337.07	&	784.70 \\
W*	&	NWW	& 7.73 & 2.33 & 2.94 & 2.58 & 3.36 \\
	&	NN	&	19.49	&	5.17	&	6.60	&	3.21	&	7.61 \\
	&	WW	&	9.35	&	3.59	&	4.22	&	3.89	&	4.82 \\
	&	N	&	72.08	&	31.04	&	45.14	&	16.72	&	72.83 \\
	&	W	&	33.87	&	62.25	&	164.54	&	46.18	&	108.58 \\
\noalign{\smallskip}\hline
\end{tabular}
\end{table}

According to Table~\ref{tab:ngg_gof01}, the distributions NWW and WW present the better fits among the competing models for all stations. The four information criteria indicate that NWW presents a higher performance over WW concerning stations 1, 2, 4 and 5. Regarding station 3, the difference between the values of AIC of NWW and WW is not considerable, although the AIC for NWW is slightly smaller in such case; the same behavior states for CAIC, BIC and HQIC. 

Both goodness-of-fit statistics A$^*$ and W$^*$ (see last ten rows of Table~\ref{tab:ngg_gof01}) agree with the information criteria in pointing NWW and WW as the two better fits. However, they suggest that NWW outperforms WW in fitting the datasets for all the five stations. To get more insight into these results, plots of the fitted densities overlapping the histograms of the wind speed data for stations 1 to 5 and the corresponding cdfs are presented in Figure~\ref{fig:nggapp}. 

\begin{figure}[h!]
 \centering
 \includegraphics[scale=0.8]{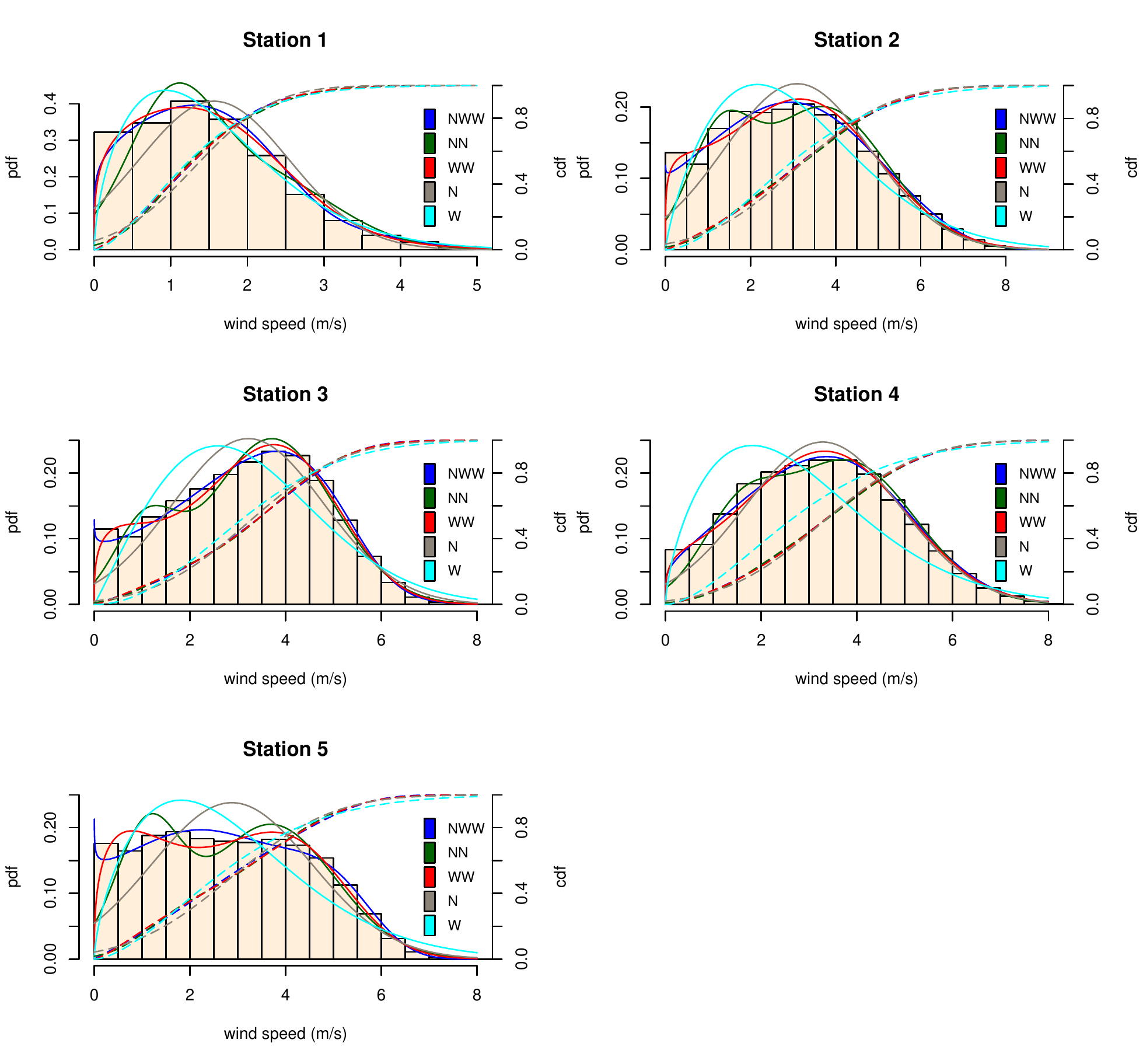}
 \caption{Histograms and fitted densities.}
 \label{fig:nggapp}
\end{figure}

It is worth pointing out that mixture models are commonly used to fit non-unimodal datasets, such as those represented by the histograms of stations 2, 3 and 5. Nonetheless, the results attest that the NWW accommodates such data better than the two mixture models presented in this study. Furthermore, NWW has one parameter less than NN or WW do.  

Finally, since the NWW distribution outperforms the competing models in fitting the wind speed data of the Northeastern Region of Brazil, according to different information criteria and formal goodness-of-fit statistics, we have good reasons to recommend its use to model similar data in future works. We also encourage practitioners of statistics to investigate the modelling benefits of the NWW (and other submodels from the Normal-($G_1$,$G_2$) class) with respect to data describing different phenomena usually modelled by mixtures.      

\section{Conclusions}
An alternative distribution for modelling wind speed data is proposed and some mathematical properties of the class that generates it are discussed, like the series representation of the pdf, the moments and the moment generating function. The general cdf of the Normal-$(G_1,G_2)$ class is written as a composition of two baselines and its submodels are identifiable as long as both baseline cdfs are. Such is the case of the NWW distribution.

The novel model has four parameters and high flexibility. It is able to fit right-skewed and left-skewed data and its pdf presents unimodal and bimodal shapes.

The Monte Carlo simulation studies indicate that the MLEs of the NWW parameters behave appropriately when the optimization is performed via the L-BFGS-B algorithm.

The modelling gains of the NWW distribution are upheld by the satisfactory results concerning the application to the wind speed data collected in the Northeastern Region of Brazil. The considered information criteria and the formal goodness-of-fit statistics of Anderson-Darling and Cram\'er-von Mises suggest that the proposed model outperforms other competing distributions that are commonly employed in wind speed modelling, especially the highly competitive mixture model of two Weibull components.  

We hope that this work may encourage the investigation of the modelling benefits of the NWW (and other identifiable submodels from the Normal-$(G_1,G_2)$ class) with respect to data describing other natural phenomena usually modelled by mixtures.

\appendix 
\section{} \label{app:a} 
The information matrix cited in section \ref{sec:ngginference} is given by $J(\bm{\theta}|\bm{X}) = -\nabla_{\bm{\theta}} {\nabla_{\bm{\theta}}}^{\top} \ell(\bm{\theta}|\bm{X}) = -(u_{jk})_{1\leq j \leq r+m,1\leq k \leq r+m}$ where:
\begin{align}
 u_{jk} = & \sum_{i=1}^n \frac{1}{f_{G_1,G_2}(x_i)} \phi \left(\frac{G_1(x_i)}{1-G_1(x_i)} \right) \frac{1}{(1-G_1(x_i))^2} \left\{ \left( \frac{2}{1-G_1(x_i)} - \frac{G_1(x_i)}{[1-G_1(x_i)]^3}\right) \right. \nonumber \\ 
\times & \left( \frac{\partial}{\partial \theta_k} G_1(x_i) \frac{\partial}{\partial \theta_j} g_1(x_i) + \frac{\partial}{\partial \theta_j} G_1(x_i) \frac{\partial}{\partial \theta_k} g_1(x_i) \right) + \frac{\partial^2}{\partial \theta_j \partial \theta_k} g_1(x_i) - \frac{g_1(x_i)}{(1-G_1(x_i))^2} \nonumber \\
\times & \left[ \frac{\partial}{\partial \theta_j} G_1(x_i) \frac{\partial}{\partial \theta_k} G_1(x_i) \left( \frac{3 {G_1}^2(x_i)}{(1-G_1(x_i))^2} + \frac{4 G_1(x_i)}{1-G_1(x_i)} +1\right) + \frac{G_1(x_i)}{1-G_1(x_i)} \right. \nonumber \\ \times & \left. \frac{\partial^2}{\partial \theta_j \partial \theta_k} G_1(x_i) \right] + \frac{g_1(x_i) G_1(x_i)}{(1-G_1(x_i))^5} \left( \frac{\partial}{\partial \theta_j} G_1(x_i) + \frac{\partial}{\partial \theta_k} G_1(x_i) \right) \nonumber \\
+& \frac{2 g_1(x_i)}{(1-G_1(x_i))^2} \left(3-\frac{2G_1(x_i)}{(1-G_1(x_i))^2} \right) \frac{\partial}{\partial \theta_k} G_1(x_i) \frac{\partial}{\partial \theta_j} G_1(x_i) + \frac{\partial^2}{\partial \theta_j \partial \theta_k} G_1(x_i) \nonumber \\
\times & \left. \frac{2g_1(x_i)}{1-G_1(x_i)} \right\} - \sum_{i=1}^n \frac{1}{{f}^2_{G_1,G_2}(x_i)} \phi^2 \left(\frac{G_1(x_i)}{1-G_1(x_i)} \right) \frac{1}{(1-G_1(x_i))^4} \nonumber \\
\times & \left[ \frac{\partial}{\partial \theta_j} g_1(x_i) + \left( \frac{2 g_1(x_i)}{1-G_1(x_i)} - \frac{g_1(x_i) G_1(x_i)}{(1-G_1(x_i))^3} \right) \frac{\partial}{\partial \theta_j} G_1(x_i) \right] \nonumber \\
\times & \left[ \frac{\partial}{\partial \theta_k} g_1(x_i) + \left( \frac{2 g_1(x_i)}{1-G_1(x_i)} - \frac{g_1(x_i) G_1(x_i)}{(1-G_1(x_i))^3} \right) \frac{\partial}{\partial \theta_k} G_1(x_i) \right], \mathrm{for}\; 1 \leq j \leq r,1 \leq k \leq r; \nonumber
\end{align}

\begin{align}
 u_{jk} = & \sum_{i=1}^n \frac{-1}{{f}^2_{G_1,G_2}(x_i)} \phi \left(\frac{G_1(x_i)}{1-G_1(x_i)} \right) \frac{1}{(1-G_1(x_i))^2} \frac{\phi(\log[1-G_2(x_i)])}{1-G_2(x_i)} \left( \frac{\partial}{\partial \theta_k} g_1(x_i) \right. \nonumber \\
 + & \left. \left[ \frac{2g_1(x_i)}{1-G_1(x_i)} - \frac{g_1(x_i) G_1(x_i)}{(1-G_1(x_i))^3} \right] \frac{\partial}{\partial \theta_k} G_1(x_i) \right) \left( \frac{\partial}{\partial \theta_j} g_2(x_i) + (1+\log[1-G_2(x_i)]) \right. \nonumber \\
 \times & \left. \frac{g_2(x_i)}{1-G_2(x_i)} \frac{\partial}{\partial \theta_j} G_2(x_i) \right), \mathrm{for}\; r < j \leq r+m,1 \leq k \leq r; \nonumber
\end{align}

\begin{align}
 u_{jk} = & \sum_{i=1}^n \frac{-1}{{f}^2_{G_1,G_2}(x_i)} \phi \left(\frac{G_1(x_i)}{1-G_1(x_i)} \right) \frac{1}{(1-G_1(x_i))^2} \frac{\phi(\log[1-G_2(x_i)])}{1-G_2(x_i)} \left( \frac{\partial}{\partial \theta_j} g_1(x_i) \right. \nonumber \\
 + & \left. \left[ \frac{2g_1(x_i)}{1-G_1(x_i)} - \frac{g_1(x_i) G_1(x_i)}{(1-G_1(x_i))^3} \right] \frac{\partial}{\partial \theta_j} G_1(x_i) \right) \left( \frac{\partial}{\partial \theta_k} g_2(x_i) + (1+\log[1-G_2(x_i)]) \right. \nonumber \\
 \times & \left. \frac{g_2(x_i)}{1-G_2(x_i)} \frac{\partial}{\partial \theta_k} G_2(x_i) \right), \mathrm{for}\; 1 \leq j \leq r, r < k \leq r+m; \nonumber
\end{align}

\begin{align}
 u_{jk} = & \sum_{i=1}^n \frac{1}{f_{G_1,G_2}(x_i)} \frac{\phi(\log[1-G_2(x_i)])}{1-G_2(x_i)} \left[ \frac{1+\log[1-G_2(x_i)]}{1-G_2(x_i)} \left( \frac{\partial}{\partial \theta_j} G_2(x_i) \frac{\partial}{\partial \theta_k} g_2(x_i) \right. \right. \nonumber \\
 + & \left. \frac{\partial}{\partial \theta_k} G_2(x_i) \frac{\partial}{\partial \theta_j} g_2(x_i) \right) + \frac{\partial^2}{\partial \theta_j \partial \theta_k} g_2(x_i) + \frac{\partial}{\partial \theta_j} G_2(x_i) \frac{\partial}{\partial \theta_k} G_2(x_i) \frac{g_2(x_i)}{(1-G_2(x_2))^2} \nonumber \\
 \times & \left(1+ \log^2[1-G_2(x_i)] + 3 \log[1-G_2(x_i)] \right) + (1+\log[1-G_2(x_i)]) \frac{\partial^2}{\partial \theta_j \partial \theta_k} G_2(x_i)  \nonumber \\
 \times & \left. \frac{g_2(x_i)}{1-G_2(x_i)} \right] - \sum_{i=1}^n \frac{1}{{f}^2_{G_1,G_2}(x_i)} \frac{\phi^2 (\log[1-G_2(x_i)])}{(1-G_2(x_i))^2} \left[ \frac{(1+\log[1-G_2(x_i)])g_2(x_i)}{1-G_2(x_i)} \right. \nonumber \\
 \times & \left. \frac{\partial}{\partial \theta_j} G_2(x_i) +\frac{\partial}{\partial \theta_j} g_2(x_i) \right] \left[ \frac{(1+\log[1-G_2(x_i)])g_2(x_i)}{1-G_2(x_i)} \frac{\partial}{\partial \theta_k} G_2(x_i) +\frac{\partial}{\partial \theta_k} g_2(x_i) \right], \nonumber \\
& \mathrm{for}\; r < j \leq r+m, r < k \leq r+m. \nonumber 
\end{align}


%
%


\bibliographystyle{elsarticle-num-names}
\bibliography{nww_preprint}   







\end{document}